\newcommand{\on}[2]{\mathop{\null#2}\limits^{#1}}
\newcommand{\bvec}[1]{\on{\,{}_\leftarrow}{{#1}}}
\newcommand{\RR}{\mathbb{R}}
\newcommand{\MCP}{\mbox{\sl MCP}}
\newcommand{\MFC}{\mbox{\sl MFC}}
\newcommand{\MIC}{\mbox{\sl MIC}}
\newcommand{\MPP}{\mbox{\sl MPP}}
\newcommand{\APSP}{\mbox{\sl APSP}}
\newcommand{\EQ}{\;=\;}
\begin{document}

\title{Minimum-cost paths for electric cars}



\author{%
  \begin{tabular}{ c c c c c}
       Dani Dorfman\thanks{Blavatnik School of Computer Science, Tel Aviv University, Israel. Email: {\tt dani.i.dorfman@gmail.com}, {\tt \{haimk,zwick\}@tau.ac.il.} Work of Uri Zwick partially supported by grant 2854/20 of the Israeli Science Foundation. Work of Haim Kaplan partially supported by ISF grant 1595/19 and the Blavatnik family foundation. Work of Dani Dorfman partially supported by the The Israeli Smart Transportation Research Center (ISTRC).}\;   &    Haim Kaplan${}^*$\;   &    Robert E.\ Tarjan\thanks{Department of Computer Science, Princeton University. Research partially supported by a gift from Microsoft. Email: {\tt ret@princeton.edu.}}\;  
       & Mikkel Thorup\thanks{BARC, University of Copenhagen, Denmark. Research supported by the VILLUM Foundation grant no.\ 16582. Email: {\tt mikkel2thorup@gmail.com}}\;    &   Uri Zwick${}^*$        
  \end{tabular}
}

\date{}
\maketitle

\fancyfoot[R]{\scriptsize{Copyright \textcopyright\ 2024 by SIAM\\
Unauthorized reproduction of this article is prohibited}}

\begin{abstract}
    An electric car equipped with a battery of a finite capacity travels on a road network with an infrastructure of charging stations. Each charging station has a possibly different cost per unit of energy. Traversing a given road segment requires a specified amount of energy that may be positive, zero or negative.  The car can only traverse a road segment if it has enough charge to do so (the charge cannot drop below zero), and it cannot charge its battery beyond its capacity. 
    To travel from one point to another the car needs to choose a \emph{travel plan} consisting of a path in the network and a recharging schedule that specifies how much energy to charge at each charging station on the path, making sure of having enough energy to reach the next charging station or the destination. The cost of the plan is the total charging cost along the chosen path. We reduce the problem of computing  plans between every two junctions of the network to two problems: Finding optimal energetic paths when no charging is allowed and finding standard shortest paths. When there are no negative cycles in the network, we obtain an $O(n^3)$-time algorithm for computing all-pairs  travel plans, where~$n$ is the number of junctions in the network. We obtain slightly faster algorithms under some further assumptions. We also consider the case in which a bound is placed on the number of rechargings allowed.
\end{abstract}


\section{Introduction}\label{S:intro}

An electric car with a battery that can store up to~$B$ units of energy travels in a road network with an infrastructure of charging stations. The road network is represented by a weighted directed graph $G=(V,A,c,r)$, where $V$ is the set of vertices (junctions), $A\subseteq V\times V$ is the set of arcs (road segments), $c:A\to\RR$ and $r:V\to \RR^+\cup\{\infty\}$. We always let $n=|V|$ and $m=|A|$. For $a\in A$, $c(a)$ is the amount of energy needed to traverse arc~$a$. This amount can be positive, zero or negative. A negative cost may indicate a downhill segment that can be used to charge the battery. The battery can be charged, at a certain cost, at some vertices of the network. For every $v\in V$, $r(v)\ge 0$ is the price for one unit of energy at~$v$. If $r(v)<\infty$, we say that~$v$ is a \emph{charging station}. If $r(v)=\infty$ then the battery cannot be charged at~$v$. The car can traverse an arc $(u,v)\in A$ only if it is in~$u$ and the charge~$b\ge 0$ in its battery satisfies $b\ge c(u,v)$. It then reaches~$v$ with a charge of $\min\{b-c(u,v),B\}$ in its battery. The charge~$b$ of the battery always satisfies $0\le b\le B$, i.e., the charge in the battery cannot drop below zero and cannot exceed the capacity~$B$.

If $c:A\to \RR^+$, i.e., all arc costs are non-negative, then the graph $G=(V,A,c,r)$ models a traditional road network with an infrastructure of gas stations in which fuel-based cars can travel.

To travel from~$s$ to~$t$ the car needs to choose a \emph{travel plan} composed of a directed path~$P$ from~$s$ to~$t$, and a \emph{recharging schedule} that specifies how much energy to charge at every charging station on~$P$, making sure that the car always has enough energy to continue its journey. We assume that the car starts at~$s$ with an empty battery. The \emph{cost} of the plan is the total charging cost along the selected path. We let $\rho_B(s,t)$ be the minimum cost of a travel plan from~$s$ to~$t$, starting at~$s$ with an empty battery, where~$B$ is the capacity of the battery. We also let $\rho_{B,\Delta}(s,t)$ be the minimum cost of a plan that uses at most~$\Delta$ charging stops.


Khuller, Malekian and Mestre~\cite{khuller2011fill} considered the problem of finding  plans in the context of conventional fuel-based cars, i.e., $c:A\to \RR^+$, and obtained an $O(n^3\min\{\Delta\log n,\Delta^2\})$-time algorithm for the all-pairs version, where $\Delta\le n$ is a bound on the number of chargings allowed. When no bound is placed on the number of times the battery can be charged, the running time of their algorithm is $O(n^4\log n)$. They also obtained an $O(\min\{n^3,\Delta n^2\log n\})$-time algorithm for the single-target version of the problem.

Dorfman et al.~\cite{DKTZ23} recently considered a related problem of finding optimal paths for electric cars when no rechargings along the way are allowed, but when arc costs are also allowed to be negative. When there are no negative arc costs, the problem is equivalent to the standard shortest paths problem. When there are negative arc costs the problem becomes a strict extension of the standard shortest paths problem. The problem is also essentially a special case of the problem considered in this paper.

In this paper we present algorithms for the all-pairs version of the minimum-cost plans problem, allowing rechargings along the way. The network may contain negative arc costs. When there are no negative cycles, or when the capacity~$B$ of the battery is sufficiently large, we obtain an $O(n^3)$-time algorithm. This improves and generalizes the $O(n^4\log n)$-time algorithm of Khuller et al.~\cite{khuller2011fill}, which does not allow negative arc costs. Our $O(n^3)$ time bound also applies when at most~$\Delta\le n$ rechargings are allowed, improving the $O(n^3\min\{\Delta\log n,\Delta^2\})$-time bound of Khuller et al.~\cite{khuller2011fill}.

We obtain our new results by showing that finding plans that allow rechargings along the way can be reduced in a simple way to the problem studied by Dorfman et al.~\cite{DKTZ23}, i.e., the problem of finding non-recharging optimal paths. More specifically, the optimal energetic costs found by the algorithms of Dorfman et al.~\cite{DKTZ23} are used to set up a constant number of min-plus products. The results of these min-plus products are used as arc costs in an auxiliary graph on which a standard APSP problem needs to be solved.


The rest of the paper is organized as follows. In the next section we review the results of Dorfman et al.~\cite{DKTZ23} for the no-rechargings case.
In Section~\ref{S:reduction} we show that the problem of finding  plans can be reduced to finding optimal paths when no rechargings are allowed and to computing standard shortest paths. In Section~\ref{S-algs} we use the reduction to obtain our efficient algorithms for finding  plans. In Section~\ref{S:inv-reduction} we describe a simple reduction in the opposite direction, from computing standard shortest paths to computing  plans, showing that the algorithms obtained for finding  plans are essentially optimal. In Section~\ref{S-initial-charges} we consider settings in which the initial charge of the car is not zero. We end in Section~\ref{S-concl} with some concluding remarks and open problems.

\section{No rechargings allowed - the quantities \texorpdfstring{$\alpha_{B,a}(s,t)$ and $\beta_{B,b}(s,t)$}{alpha(s,t) and beta(s,t)}}\label{S:alpha-beta}

We review the no-rechargings results of Dorfman et al.~\cite{DKTZ23} and introduce the quantities $\alpha_{B,a}(s,t)$ and $\beta_{B,b}(s,t)$ that play an important role in the reduction presented in the next section.

Let $\alpha_{B,a}(s,t)$ be the Maximum Final Charge (MFC) with which the car can reach~$t$ if it starts in~$s$ with charge~$a$, where $0\le a\le B$, in its battery, when \emph{no} recharging is allowed along the way. If it is not possible to get from~$s$ to~$t$ with an initial charge of~$a$ at~$s$, we let $\alpha_{B,a}(s,t)=-\infty$. We will mostly be interested in the cases $a=0$ or $a=B$, i.e., starting from~$s$ with either an empty or a full battery.

Let $\beta_{B,b}(s,t)$ be the Minimum Initial Charge (MIC) needed in~$s$ in order to reach~$t$ with a charge of at least~$b$ when again \emph{no} recharging is allowed along the way. If it is not possible to get to~$t$ with a charge of~$b$ even if we start from~$s$ with a charge of~$B$, i.e., with a full battery, we let $\beta_{B,b}(s,t)=\infty$. Again, we are mostly interested in the cases $b=0$ or $b=B$.

If all arc costs are non-negative then it is easy to see that $\alpha_{B,a}(s,t)=a-\delta(s,t)$, if $a\ge \delta(s,t)$, and $\alpha_{B,a}(s,t)=-\infty$ otherwise, where $\delta(s,t)$ is the standard distance from~$s$ to~$t$ in the graph, i.e., the length of the shortest path from~$s$ to~$t$ with respect to the cost function~$c$. Similarly, $\beta_{B,b}(s,t)=\delta(s,t)+b$, if $\delta(s,t)+b\le B$, and $\beta_{B,b}(s,t)=\infty$ otherwise.
Thus, if all arc costs are non-negative, then $\alpha_{B,a}(s,t)$ and $\beta_{B,b}(s,t)$ for all $s,t\in V$ and $a,b\in\{0,B\}$, can be easily computed after solving a standard All-Pairs Shortest Paths (APSP) problem.

When arc costs can be both positive and negative, computing all values of $\alpha_{B,a}(s,t)$ and $\beta_{B,b}(s,t)$ is more difficult. Dorfman et al.~\cite{DKTZ23} gave an $O(mn+n^2\log n)$-time algorithm for the all-pairs versions when there can be negative arcs but there are no negative cycles and an $O(mn^2+n^3\log n)$-time algorithm when there can be negative cycles. They also gave an $O(mn+n^2\log n)$-time algorithm for the all-pairs versions when there can be negative cycles, if the capacity of the battery is sufficiently large, namely $B\ge 3nM$, where $M=\max_{(u,v)\in A}|c(u,v)|$. \footnote{This can be improved to $B\ge nM$.}


The algorithms of Dorfman et al.~\cite{DKTZ23}, which we do not repeat here, work with another quantity $\delta_{B,a}(s,t)=a-\alpha_{B,a}(s, t)$, 
the \emph{minimum depletion} of the battery when starting at~$s$ with a charge~$a$ in the battery. This quantity is closely related to the standard distance $\delta(s,t)$ from~$s$ to~$t$ in the graph and in some cases equal to it. They show that $\delta_{B,a}(s,t)$ can be computed by suitable adaptations of the standard Bellman-Ford and Dijkstra algorithms. They also show that $\beta^G_{B,b}(s,t)=B-\alpha_{B,B-b}^{\bvec{G}}(t,s)$, where $\bvec{G}$ is the graph obtained from~$G$ by reversing the direction of all arcs, maintaining the arc costs. Thus, algorithms for computing $\alpha_{B,a}(s,t)$ can also be used to compute $\beta_{B,b}(s,t)$, and vice versa. (This is true for the all-pairs version of the problems. A single-source algorithm for the maximum final charges $\alpha_{B,a}(s,t)$ becomes a single-target algorithm for the minimum initial charges $\beta_{B,b}(s,t)$.)


\section{Handling rechargings by reduction to no rechargings}\label{S:reduction}

In this section we describe a simple reduction from computing $\rho_B(s,t)$, for every $s,t\in V$, to the computation of $\alpha_{B,a}(s,t)$ and $\beta_{B,b}(s,t)$, for every $s,t\in V$ and $a,b\in\{0,B\}$, and to the solution of a standard all-pairs shortest paths (APSP) problem on an auxiliary network.

No assumptions are made in this section on $c:A\to \RR$. Arc costs may be negative and the graph may even contain negative cycles. ($\rho_B(s,t)$ is well-defined even in the presence of negative cycles.) For simplicity we assume that all finite charging costs $r(v)$ are distinct. (Ties can be broken easily.)

\subsection{Travel plans}

We begin with a formal definition of \emph{travel plans}. We use $u^a$, where $u\in V$ and $a\in[0,B]$ to denote the state in which the car is at~$u$ with a charge of~$a$ in its battery. A travel plan is then simply a sequence of states such that the move between two consecutive states corresponds to either traversing an arc of the graph, or recharging the battery by a certain amount.

\begin{Definition}[Travel plans] A \emph{travel plan} $P$ from~$s$ to~$t$ in a graph $G=(V,A,c,r)$ is a sequence $u_0^{a_0} u_1^{a_1}\ldots u_\ell^{a_\ell}$, where $s=u_0,u_1,\ldots,u_\ell=t\in V$, $0=a_0$, $0 \leq a_i \leq B$ for all $0 \leq i \leq \ell$, and such that for every $i=0,1,\ldots,\ell-1$, either $(u_i,u_{i+1})\in A$, $a_i\ge c(u_i,u_{i+1})$ and $a_{i+1}=\min\{a_i-c(u_i,u_{i+1}),B\}$, corresponding to legally traversing an arc, or $u_i=u_{i+1}$ and $a_i<a_{i+1}$, corresponding to a charging of the battery. The cost of the plan is $cost(P)=\sum_{i\in R} r(u_i)(a_{i+1}-a_i)$, where $R=\{0\le i\le\ell\mid u_i=u_{i+1}\}$ is the set of indices in which rechargings take place.
\end{Definition}

If $P=u_0^{a_0} u_1^{a_1}\ldots u_\ell^{a_\ell}$ is a travel plan, then $u_0,u_1,\ldots,u_\ell$ is the path used by the plan, where each vertex in which a recharging takes place has two consecutive appearances. A minimum-cost plan from~$s$ to~$t$ is of course a plan from~$s$ to~$t$ whose cost is minimum. If the graph contains no negative cycles, it is not difficult to see that a path used by an optimal plan may be assumed to be simple. This is not true, however, in the presence of negative cycles.

\newcommand{\G}{{\cal G}}
\newcommand{\V}{{\cal V}}
\newcommand{\E}{{\cal E}}

Conceptually, we can define an infinite graph $\G=(\V,\E,\ell)$, where $\V=\{u^a \mid u\in V\;,\; a\in[0,B]\}$, $\E=\E_1\cup \E_2$, $\E_1=\{(u^a,v^b)\mid (u,v)\in A\;,\;a\ge c(u,v)\;,\; b=\min\{a-c(u,v),B\}$, $\E_2=\{(u^a,u^b)\mid u\in V\;,\; a<b\}$, and where $\ell(u^a,v^b)=0$, for every $(u^a,v^b)\in \E_1$, and $\ell(u^a,u^b)=r(u)(b-a)$, for every $(u^a,u^b)\in \E_2$. Then a travel plan is simply a path in~$\G$, and the cost of the plan is simply the length of the path with respect to the length function~$\ell$. The challenge, of course, is to efficiently compute shortest paths in this implicit infinite graph. (If all arc costs are integral, the graph becomes finite, but its size is $O(B(m+n))$ which is still too large.)


\subsection{Structure of optimal plans}




Each plan $P$ from~$s$ to~$t$ can be partitioned into segments $P=P_0P_1\ldots P_k$ such that no rechargings take place within each segment and such that a recharging does take place when moving from segment to segment. Each segment $P_i$ is then of the form $P_i=x_i^{b_i}\cdots x_{i+1}^{c_i}$, where $c_i<b_{i+1}$, if $i<k$. Also $s=x_0$ and $t=x_{k+1}$. The whole plan~$P$ is then $x_0^{b_0}\cdots x_1^{c_0}|x_1^{b_1}\cdots x_2^{c_1}|x_2^{b_2}\cdots x_3^{c_2}|\cdots|x_{k-1}^{b_{k-1}}\cdots x_k^{c_{k-1}}|x_k^{b_k}\cdots x_{k+1}^{c_k}$, where each vertical line separates two segments and indicates a place at which a charging takes place. The vertices $x_1,x_2,\ldots,x_k$ are the vertices in which rechargings take place. (If no rechargings take place along~$P$, then $k=0$ and $P=P_0$. If a charging takes place at~$s$, then $s=x_0=x_1$ and $P_0=s^0$.)

The following two lemmas are extensions of similar lemmas of Khuller, Malekian and Mestre~\cite{khuller2011fill}.

\begin{lemma}\label{L-x}
    Let $P$ be a minimum-cost plan from $s$ to $t$. Let $P=P_0P_1\ldots P_k$ a partition of~$P$ into segments as above, and let $P_i=x_i^{b_i}\cdots x_{i+1}^{c_i}$, where $c_i<b_{i+1}$, if $i<k$. Recall that $x_1,x_2,\ldots,x_k$ are the vertices on~$P$ in which rechargings take place and that $s=x_0$ and $t=x_{k+1}$. Then, for $i=1,\ldots,k-1$,
    \begin{itemize}
    \item[\textup{(i)}] If $r(x_i)> r(x_{i+1})$, then $P_i$ contains a state~$y_{i}^0$, i.e., a vertex~$y_i$ at which the car has an empty battery. (The vertex~$y_{i}$ may be~$x_{i+1}$ but not~$x_i$.)
    \item[\textup{(ii)}] If $r(x_i)<r(x_{i+1})$, then $P_i$ contains a state~$y_{i}^B$, i.e., a vertex~$y_i$ at which the car has a full battery. (The vertex~$y_{i}$ may be~$x_i$ but not~$x_{i+1}$.)
    \end{itemize}
\end{lemma}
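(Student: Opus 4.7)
The plan is to prove both parts by a local perturbation argument: in each case, if the required state is missing from $P_i$, I will shift a small amount of charging from the more expensive of $\{x_i,x_{i+1}\}$ to the cheaper one, producing a strictly cheaper travel plan and contradicting the optimality of $P$. This is the same exchange idea used by Khuller, Malekian and Mestre, but here it has to be adapted to the presence of the clipping $\min\{\cdot,B\}$ imposed by the battery capacity, and to the fact that arc costs may be negative.

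For part (i), suppose for contradiction that every state in $P_i$ has strictly positive charge (the state $x_i^{b_i}$ is automatically excluded, since $b_i>c_{i-1}\ge 0$). Pick $\epsilon>0$ smaller than the minimum charge appearing in $P_i$, smaller than the recharging amount $b_i-c_{i-1}$ at $x_i$, and smaller than $b_{i+1}-c_i$. Modify $P$ by charging $\epsilon$ less at $x_i$, so $P_i$ now starts with charge $b_i-\epsilon$. Writing $a_j'$ for the new sequence of charges along $P_i$, I would prove by induction on $j$ that $a_j-\epsilon\LE a_j'\LE a_j$; the upper bound is immediate from the monotonicity of $x\mapsto\min\{x-c,B\}$, and the lower bound requires a short case split on whether clipping at $B$ occurs at step $j$. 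The choice of $\epsilon$ guarantees feasibility ($a_j'>0$) and that the modified plan arrives at $x_{i+1}$ with charge $a_k'\in[c_i-\epsilon,c_i]$; then the recharging at $x_{i+1}$ is adjusted so that $P_{i+1}$ still starts with charge $b_{i+1}$, costing at most $\epsilon\,r(x_{i+1})$ more than in $P$. The total cost change is therefore at most $\epsilon(r(x_{i+1})-r(x_i))<0$, contradicting optimality.

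For part (ii), the argument is symmetric. Assume every state in $P_i$ has charge strictly less than $B$. Then no clipping occurs along $P_i$ in the original plan, so $a_{j+1}=a_j-c_j$ holds throughout. Choosing $\epsilon>0$ small enough (in particular $\epsilon<B-\max_j a_j$ and $\epsilon<b_{i+1}-c_i$), I charge $\epsilon$ more at $x_i$; a quick induction then yields $a_j'=a_j+\epsilon$ exactly, including the endpoint $a_k'=c_i+\epsilon$. The recharging at $x_{i+1}$ can then be reduced by exactly $\epsilon$, so the net cost change is $\epsilon(r(x_i)-r(x_{i+1}))<0$, again a contradiction.

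The main obstacle in both parts is the nonlinearity introduced by the clipping $\min\{\cdot,B\}$, which blocks a purely linear response of the charge sequence to the perturbation of $b_i$. In part (i) the clipping can only attenuate the drop in $a_k'$, which fortunately still works in favor of the cost comparison, so only an inequality is needed; in part (ii), the assumption that no state in $P_i$ reaches $B$ is precisely what rules clipping out and delivers the exact $\epsilon$ savings at $x_{i+1}$. A final sanity check that the modified plan remains valid (all intermediate charges lie in $[0,B]$, both recharging amounts are nonnegative, and $b_{i+1}\LE B$ is preserved) follows from the choice of $\epsilon$.
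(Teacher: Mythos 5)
Your proof is correct and follows essentially the same exchange argument as the paper: shift a small amount of charging from the more expensive of $x_i,x_{i+1}$ to the cheaper one, chosen small enough to preserve feasibility, and contradict the optimality of $P$. Your extra care in part (i) --- establishing $a_j-\epsilon\le a_j'\le a_j$ by induction rather than assuming the charge sequence shifts down uniformly --- is a worthwhile refinement (the paper's uniform downshift is not literally a legal plan at a step where clipping at $B$ occurs), but it is the same underlying approach.
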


\begin{proof}
Let $1\le i<k$ and let $P_i=u_1^{a_1}u_2^{a_2}\ldots u_\ell^{a_\ell}$, where $u_1^{a_1}=x_i^{b_i}$ and $u_\ell^{a_\ell}=x_{i+1}^{c_i}$. Let $d_i=b_i-c_{i-1}>0$ and $d_{i+1}=b_{i+1}-c_{i}>0$ be the amount of charge added to the battery at~$x_i$ and $x_{i+1}$, respectively.

    (i) Assume that $r(x_i)> r(x_{i+1})$. Let $\varepsilon=\min\{a_1,a_2,\ldots,a_\ell\}$. If $\varepsilon=0$, we are done, as there is an index $1\le j\le \ell$ such that $a_j=0$. (Note that $j\ge 2$ as $a_1>0$.) Otherwise, let $\delta=\min\{d_i,\varepsilon\}>0$. We can then replace the segment~$P_i$ in~$P$ with the segment $P'_i=u_1^{a_1-\delta}u_2^{a_2-\delta}\ldots u_\ell^{a_\ell-\delta}$. This is equivalent to charging~$\delta$ less at~$x_i$ and~$\delta$ more at~$x_{i+1}$. The resulting plan~$P'$ is valid and $cost(P')=cost(P)-\delta\,(r(x_i)-r(x_{i+1}))<cost(P)$, contradicting the optimality of~$P$.

    (ii) Assume that $r(x_i)< r(x_{i+1})$. Let $\varepsilon=B-\max\{a_1,a_2,\ldots,a_\ell\}$. If $\varepsilon=0$, we are done, as there is an index $1\le j\le \ell$ such that $a_j=B$. (Note that $j<\ell$ as $a_\ell<B$.) Otherwise, let $\delta=\min\{d_{i+1},\varepsilon\}>0$. We can then replace the segment~$P_i$ in~$P$ with the segment $P'_i=u_1^{a_1+\delta}u_2^{a_2+\delta}\ldots u_\ell^{a_\ell+\delta}$. This is equivalent to charging~$\delta$ more at~$x_i$ and~$\delta$ less at~$x_{i+1}$. The resulting plan~$P'$ is valid and $cost(P')=cost(P)+\delta\,(r(x_i)-r(x_{i+1}))<cost(P)$, again contradicting the optimality of~$P$.
\end{proof}

See Figure~\ref{F-fig1} for a schematic description of Lemma~\ref{L-x} and the following Lemma~\ref{L-y} which follows almost immediately from Lemma~\ref{L-x}.


\begin{lemma}\label{L-y}
    Let $P$ be a minimum-cost plan from~$s$ to~$t$. Then, there are states $y_0^{a_0},y_1^{a_1}\ldots,y_{k}^{a_k}$ on~$P$ such that $s=y_0$, $t=y_k$, $a_0=0$, and $a_i\in\{0,B\}$, for $i=1,\ldots,k-1$, and such that at most one recharging takes place between~$y_i^{a_i}$ and~$y_{i+1}^{a_{i+1}}$ on~$P$, for $i=0,\ldots,k-1$. (In other words $y_i$ is either reached with an empty battery, or left with a full battery, for each $i=0,1,\ldots,k-1$. Note that $a_k$, the final charge at~$y_k=t$, is not required to be in $\{0,B\}$.)
\end{lemma}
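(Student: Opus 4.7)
The plan is to apply Lemma~\ref{L-x} segment by segment and to pad the resulting sequence with the endpoints~$s$ and~$t$. Take the partition $P=P_0P_1\cdots P_k$ of the minimum-cost plan into recharging-free segments introduced just before Lemma~\ref{L-x}, with rechargings happening at $x_1,\ldots,x_k$. Because all finite charging costs are assumed distinct, we have $r(x_i)\neq r(x_{i+1})$ for every $1\le i\le k-1$, so exactly one of the two alternatives of Lemma~\ref{L-x} applies to each internal segment $P_i$. This yields a state $y_i^{a_i}$ lying on $P_i$ with $a_i\in\{0,B\}$.

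Next, set $y_0=s$ with $a_0=0$ (the initial condition at~$s$) and $y_k=t$ with $a_k$ the final charge at~$t$ (which the lemma leaves unconstrained). The constraint $a_i\in\{0,B\}$ for $1\le i\le k-1$ is delivered directly by the previous step, and $s=y_0$, $t=y_k$, $a_0=0$ hold by construction, so all the charge conditions required by the statement are satisfied.

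Finally, verify the recharging-count condition. For $1\le i\le k-2$, the portion of~$P$ from $y_i^{a_i}$ to $y_{i+1}^{a_{i+1}}$ consists of the suffix of $P_i$ from $y_i$ to $x_{i+1}^{c_i}$, the single recharging $x_{i+1}^{c_i}\to x_{i+1}^{b_{i+1}}$, and the prefix of $P_{i+1}$ from $x_{i+1}^{b_{i+1}}$ to $y_{i+1}$, for a total of exactly one recharging. The boundary case $i=0$ additionally contains all of $P_0$ (preceding the recharging at $x_1$), and the boundary case $i=k-1$ additionally contains all of $P_k$ (following the recharging at $x_k$); each still has exactly one recharging. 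The degenerate cases $k\le 1$ are handled separately by taking $y_0=s$ and $y_1=t$ (so the intermediate-charge constraints are vacuous and the single interval contains at most one recharging).

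The bulk of the work is done by Lemma~\ref{L-x}; the only potential obstacle here is the bookkeeping around the two endpoint intervals $[y_0,y_1]$ and $[y_{k-1},y_k]$, where one must check that absorbing all of $P_0$ (respectively all of $P_k$) into the interval does not introduce an extra recharging, which is immediate from the fact that $P_0$ and $P_k$ are themselves recharging-free. Thus Lemma~\ref{L-y} follows from Lemma~\ref{L-x} essentially by relabeling.
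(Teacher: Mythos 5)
Your proof is correct and follows essentially the same route as the paper's: apply Lemma~\ref{L-x} to each internal segment of the partition $P=P_0\cdots P_k$, pad with $y_0=s^0$ and $y_k=t^{a_k}$, and observe that consecutive $y_i$'s straddle exactly one recharging. Your explicit appeal to the distinctness of the charging costs (so that one of the two cases of Lemma~\ref{L-x} always applies) and your bookkeeping of the boundary intervals are details the paper leaves implicit, but the argument is the same.
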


\begin{proof}
    Let $P=P_0P_1\ldots P_{k}$ be the partition of~$P$ as in Lemma~\ref{L-x}. By the lemma, each segment $P_i$, for $i=1,2,\ldots,k-1$, contains a state $y_i^{a_i}$, where $a_i\in\{0,B\}$. Let $y_0^{a_0}=s^0$ and $y_k^{a_k}=t^{a_k}$, where~$a_k$ is the final charge at~$t$ according to~$P$. The sequence $y_0^{a_0},y_1^{a_1}\ldots,y_{k}^{a_k}$ satisfies all the required properties. In fact, there is exactly one recharging between~$y_i^{a_i}$ and~$y_{i+1}^{a_{i+1}}$ on~$P$, for $i=0,\ldots,k-1$, unless $k=0$, in which case $y_0^{a_0}=s^0$, $y_1=t^{a_1}$ and there is no recharging between them.
\end{proof}


We note that in the sequence $y_0^{a_0},y_1^{a_1}\ldots,y_{k}^{a_k}$ whose existence is proved in Lemma~\ref{L-y} we may have $y_i=y_{i+1}$, in which case $a_i=0$ and $a_{i+1}=B$. This can happen if $y_i^{a_i}$ is the last state in~$P_i$ and $y_{i+1}^{a_{i+1}}$ is the first state in~$P_{i+1}$.


If there are no negative arc costs, then a minimum-cost plan from~$s$ to~$t$ always reaches~$t$ with an empty battery. This is not necessarily the case when arc costs may be negative.

\begin{figure}[t]
\begin{center}
\includegraphics[scale=0.43]{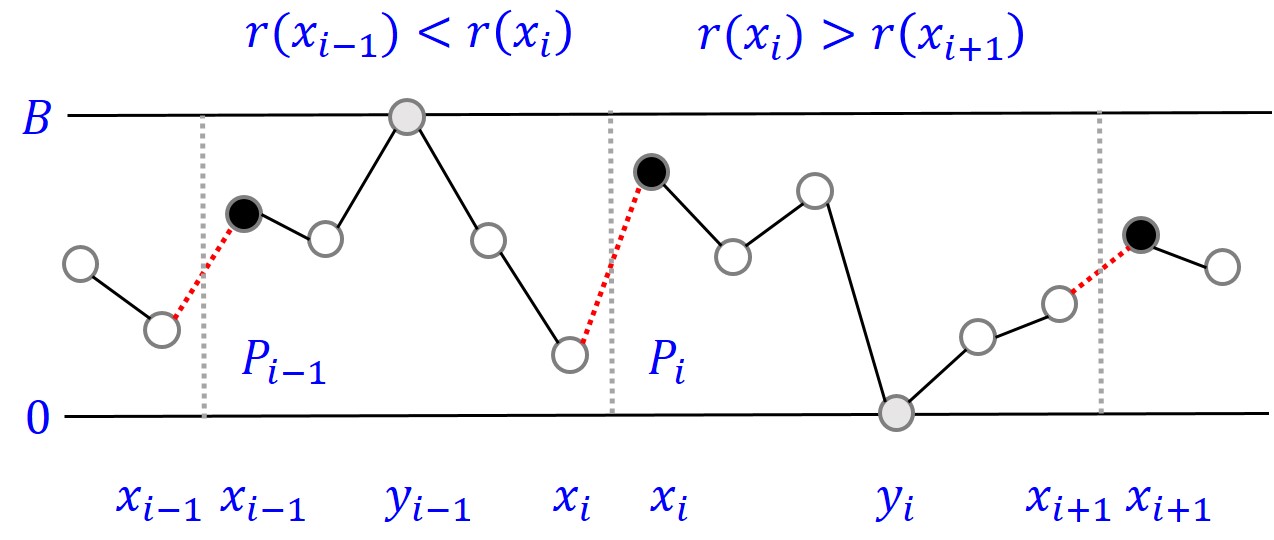} 
\end{center}
\vspace*{-15pt}
\caption{A schematic description of Lemmas~\ref{L-x} and~\ref{L-y}. The height of a vertex represents the charge in the battery. Dotted red lines represent rechargings.}
\label{F-fig1}
\end{figure}




\subsection{The reduction}\label{S-reduction}

Lemma~\ref{L-y} suggests the following algorithm for computing $\rho_B(s,t)$ for every $s,t\in V$. Construct a complete graph $G^{0,B}$ on the vertex set $V^{0,B}=V^0\cup V^B$, where $V^a=\{ u^a \mid u\in V\}$, for $a\in\{0,B\}$. Here, $u^a$ corresponds to being at~$u$ with charge of at least~$a$ in the battery. We define a new cost function~$\ell_B$ such that $\ell_B(u^a,v^b)$, where $u,v\in V$ and $a,b\in\{0,B\}$, is the minimum cost of getting from~$u$ to~$v$, starting with an initial charge of~$a$ at~$u$ and reaching~$v$ with a charge of at least~$b$, when \emph{at most one recharging} is allowed along the way. By Lemma~\ref{L-y}, computing minimum-cost plans in the original graph would then reduce to finding shortest paths in~$G^{0,B}$ with respect to the cost function $\ell_B$. More specifically, $\rho_B(s,t)=\delta_{\ell_B}(s^0,t^0)$, for every $s,t\in V$, where $\delta_{\ell_B}(s^0,t^0)$ is the distance from~$s^0$ to~$t^0$ with respect to~$\ell_B$. (Note that $t^0$ here stands for reaching~$t$ with a charge of at least~$0$. When there are negative arc costs, a minimum-cost plan may reach the destination with strictly positive charge.)


We next describe how the new costs $\ell_B(u^a,v^b)$ are computed. This is the main step of our reduction. Consider the cheapest way of getting from $u^a$ to~$v^b$, i.e., from~$u$ with initial charge~$a$ to~$v$ with final charge at least~$b$, with (an optional) recharging at a given vertex $x\in V$ along the way. Recharging is not allowed in any other vertex. (We do not assume that $u,v$ and~$x$ are distinct. We may even have $u=v=x$.) Clearly, we should choose a path from~$u^a$ to~$x$ that reaches~$x$ with the maximum possible charge, namely $\alpha_{B,a}(u,x)$. We then buy just enough charge at~$x$ to reach~$v$ with a charge of at least~$b$. The minimum initial charge at~$x$ required to reach~$v^b$ is $\beta_{B,b}(x,v)$. The amount of charge we need to buy at~$x$ is thus $(\beta_{B,b}(x,v)-\alpha_{B,a}(u,x))^+$, where $z^+=\max\{0,z\}$. (If $\alpha_{B,a}(u,x)>\beta_{B,b}(x,v)$, we do not need to recharge at~$x$.) Considering all choices for the recharging vertex~$x$, including~$u$ and~$v$, we get:
\[ \ell(u^a,v^b) \EQ \min_{x\in V} \; r(x)\cdot(\beta_{B,b}(x,v)-\alpha_{B,a}(u,x))^+ \;. \]
We allow the case $x=u$, in which case $\alpha_{B,a}(u,u)\ge a$. (Strict inequality is possible in the presence of negative cycles.) We also allow the case $x=v$, in which case $\beta_{B,b}(v,v)\le b$. (Again a strict inequality is possible in the presence of negative cycles.)

To compute $\ell(u^a,v^b)$ we first remove the $+$ from the definition of $\ell(u^a,v^b)$, namely,
\[ \ell'(u^a,v^b) \EQ \min_{x\in V} \; r(x)\cdot(\beta_{B,b}(x,v)-\alpha_{B,a}(u,x)) \;. \]
It is then easy to see that
\[ \ell(u^a,v^b) \EQ \ell'(u^a,v^b)^+\;,\]
as if $\ell'(u^a,v^b)<0$ then it is possible to get from $u^a$ to $v^b$ without recharging, i.e., at cost~$0$.

To compute $\ell'(u^a,v^b)$ we form a layered graph $(V^{0,B}\times U)\cup(U \times V^{0,B})$, where $U=\{x\in V \mid r(x)<\infty\}$. (We assume that different copies of $V^{0,B}$ are used in the first and third layers of this graph, see Figure~\ref{F-fig2}.) We let 
\[w(u^a,x)\EQ -r(x)\alpha_{B,a}(u,x) \qquad,\qquad w(x,v^b) \EQ r(x)\beta_{B,b}(x,v)\;.\] 
Now $\ell'(u^a,v^b)$ can be computed using a single min-plus product.\footnote{The min-plus product of an $n\times p$ matrix $A=(a_{i,j})$ and a $p \times n$ matrix $A=(a_{i,j})$, is the $n\times n$ matrix $C=(c_{i,j})$ defined by $c_{i,j}=\min_{k=1}^p a_{i,k}+b_{k,j}$, for every $1\le i,j\le n$.} The time required is $O(n^3)$ using the na\"{i}ve algorithm, or slightly faster using the algorithm of Williams \cite{Williams18,Williams21}, assuming that all $\alpha_{B,a}(u,x)$ and $\beta_{B,b}(x,v)$ values are given to us.

\begin{figure}[t]
\begin{center}
\includegraphics[scale=0.35]{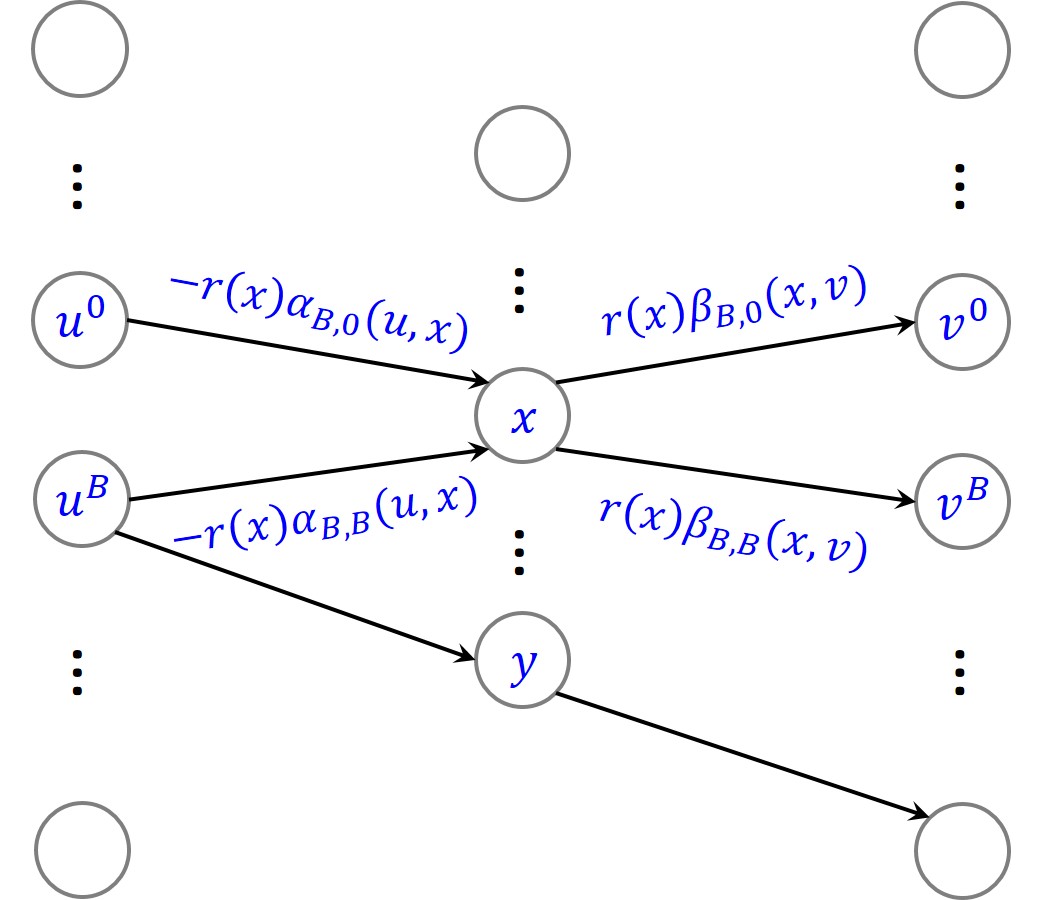} 
\end{center}
\caption{The min-plus product used to compute $\ell'(u^a,v^b)$, for every $u,v\in V$ and $a,b\in\{0,B\}$. The vertices in the middle layer are the vertices~$x\in V$ for which $r(x)<\infty$.}
\label{F-fig2}
\end{figure}

The following lemma formally proves the correctness of the reduction. Recall that $\delta_{\ell_B}(s^0,t^0)$ is the distance from~$s^0$ to~$t^0$ in the graph $G^{0,B}$ with respect to~$\ell_B$.

\begin{lemma}
    $\rho_B(s,t)=\delta_{\ell_B}(s^0,t^0)$, for every $s,t\in V$.
\end{lemma}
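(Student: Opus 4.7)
The plan is to establish the two inequalities $\rho_B(s,t)\ge \delta_{\ell_B}(s^0,t^0)$ and $\rho_B(s,t)\le \delta_{\ell_B}(s^0,t^0)$ separately.

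For the $\ge$ direction, I would take a minimum-cost plan $P$ and apply Lemma~\ref{L-y} to obtain states $y_0^{a_0},y_1^{a_1},\ldots,y_k^{a_k}$ on $P$ with $y_0=s$, $y_k=t$, $a_0=0$, $a_i\in\{0,B\}$ for $1\le i\le k-1$, and at most one recharging between consecutive states. The sub-plan of $P$ between $y_i^{a_i}$ and $y_{i+1}^{a_{i+1}}$ is then a feasible candidate in the minimization defining $\ell_B(y_i^{a_i},y_{i+1}^{a_{i+1}})$, so its cost is at least $\ell_B(y_i^{a_i},y_{i+1}^{a_{i+1}})$. For the final sub-plan, which reaches $t$ with some charge $a_k\ge 0$, I would set the target to $t^0$: the ``charge at least $0$'' requirement is automatic, so the final sub-plan is also a candidate for $\ell_B(y_{k-1}^{a_{k-1}},t^0)$. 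Summing the lower bounds, $\rho_B(s,t)$ is at least the total length of the path $s^0,y_1^{a_1},\ldots,y_{k-1}^{a_{k-1}},t^0$ in $G^{0,B}$, which in turn is at least $\delta_{\ell_B}(s^0,t^0)$.

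For the reverse inequality, I would take a shortest path $s^0=z_0,z_1,\ldots,z_k=t^0$ in $G^{0,B}$ and, for each edge $(z_i,z_{i+1})$, fix an explicit sub-plan $S_i$ achieving the value $\ell_B(z_i,z_{i+1})$: a path between the underlying vertices using at most one recharging and of cost exactly $\ell_B(z_i,z_{i+1})$. The concatenation $S_0S_1\ldots S_{k-1}$ is the candidate travel plan. The subtlety is that $S_{i+1}$ is designed for a specific nominal starting charge at $z_{i+1}$, whereas executing $S_i$ may deliver strictly more charge there, since $\ell_B$ only insists on arriving with charge \emph{at least} the nominal amount. I would resolve this by observing that each arc transition $a\mapsto \min\{a-c(u,v),B\}$ is a monotone non-decreasing function of the starting charge, so every arc of $S_{i+1}$ remains traversable under extra initial charge, and the single optional recharge of $S_{i+1}$ can simply be reduced (or omitted) to still land on the desired final charge at $z_{i+2}$, at cost no larger than $\ell_B(z_{i+1},z_{i+2})$. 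The total cost of the concatenation is therefore at most $\sum_{i=0}^{k-1}\ell_B(z_i,z_{i+1})=\delta_{\ell_B}(s^0,t^0)$.

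The main obstacle is this monotonicity step in the $\le$ direction---carefully verifying that arriving with excess charge at the start of a sub-plan never violates the battery cap, never breaks traversability, and never forces extra expense. Once that is in place, Lemma~\ref{L-y} supplies exactly the structural decomposition needed for $\ge$, and the definition of $\ell_B$ as ``minimum cost using at most one recharging'' is precisely what is needed to realize each edge of $G^{0,B}$ as an executable segment of a global travel plan.
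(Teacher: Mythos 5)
Your proof is correct and follows essentially the same route as the paper's: the $\ge$ direction via the decomposition of Lemma~\ref{L-y}, and the $\le$ direction by concatenating, along a shortest path in $G^{0,B}$, sub-plans realizing the $\ell_B$ values. The one place you go beyond the paper is the monotonicity argument showing that arriving at an intermediate vertex with surplus charge never breaks feasibility or increases cost; the paper simply asserts that the concatenation is a valid plan, so your added care is welcome rather than a divergence.
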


\begin{proof}
    Any path~$P$ from~$s^0$ to~$t^0$ in the complete graph $G^{0,B}$ with the cost function $\ell_B$ corresponds to a plan $P'$ from~$s$ to~$t$ in the original graph~$G$ whose cost is exactly~$\ell_B(P)$. More specifically, let $P$ be the path $s=y_0^{0},y_1^{a_1},\ldots,y_{k-1}^{a_{k-1}},y_k^0=t$. Then each arc $(y_i^{a_i},y_{i+1}^{a_{i+1}})$ can be replaced by a plan in the original graph~$G$ whose cost is~$\ell(y_i^{a_i},y_{i+1}^{a_{i+1}})$. Concatenating all these plans we get a plan~$P'$ in~$G$ whose total cost is $\ell_B(P)=\sum_{i=0}^{k-1}\ell(y_i^{a_i},y_{i+1}^{a_{i+1}})$. Thus, $\rho_B(s,t)\le \delta_{\ell_B}(s^0,t^0)$.

    Conversely, let $P'$ be a minimum-cost plan from~$s$ to~$t$ in the original graph~$G$. By Lemma~\ref{L-y} there are states $y_0^{a_0},y_1^{a_1},\ldots,y_k^{a_k}$ on~$P$ such that $y_0^{a_0}=s^0$, $y_k=t$, $a_0,\ldots,a_{k-1}\in \{0,B\}$, and such that at most one recharging takes place between~$y_i^{a_i}$ and~$y_{i+1}^{a_{i+1}}$. Let~$P$ be the path $s=y_0^{0},y_1^{a_1},y_2^{a_2}\ldots,y_{k}^{0}$ in $G^{0,B}$. (Note that we replaced $a_k$ here by~$0$, possibly throwing away extra charge with which the plan reaches~$t$.) The cost $\ell_B(P)$ of~$P$ is at most the cost of the plan~$P'$. Thus, $\rho_B(s,t)\ge \delta_{\ell_B}(s^0,t^0)$.
\end{proof}

Let $\MCP(m,n)$ denote the the problem of computing all-pairs \emph{Minimum-Cost Plans} on a graph with~$n$ vertices and~$m$ arcs, i.e., the computation of $\rho_B(s,t)$ for every $s,t\in V$. Let $\MFC(m,n)$ denote the problem of computing all-pairs \emph{Maximum Final Charges}, i.e., the computation of $\alpha_{B,a}(s,t)$ for every $s,t\in V$, for a given $0\le a\le B$, on a graph with~$n$ vertices and~$m$ arcs. Let $\MIC(m,n)$ denote the problem of computing all-pairs \emph{Minimum Initial Charges}, i.e., the computation of $\beta_{B,b}(s,t)$ for every $s,t\in V$, for a given $0\le b\le B$, again on a graph with~$n$ vertices and~$m$ arcs. Let $\MPP(n,p)$ be the problem of computing a \emph{Min-Plus Product} of an $n\times p$ matrix and a $p\times n$ matrix. Finally, let $\APSP(m,n)$ be the problem of computing the All-Pairs Shortest Paths, i.e., $\delta(s,t)$ for every $s,t\in V$, on a graph with $n$ vertices and $m$ edges.

Putting everything together, we get:

\begin{theorem}
    $\MCP(m,n)$ can be reduced to two instances of $\MFC(m,n)$ and $\MIC(m,n)$ each, an instance of $\MPP(2n,p)$, where $p\le n$ is the number of charging stations, and an instance of $\APSP(4n^2,2n)$.
\end{theorem}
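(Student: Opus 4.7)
The statement is essentially a bookkeeping summary of the reduction already laid out in Section~\ref{S-reduction}, so my plan is to verify that each ingredient maps to exactly one instance of one of the listed subproblems, with the claimed parameters.

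First, I would invoke $\MFC(m,n)$ twice, once with $a=0$ and once with $a=B$, to obtain $\alpha_{B,a}(u,x)$ for every $u,x\in V$ and $a\in\{0,B\}$. Symmetrically, I would invoke $\MIC(m,n)$ twice, once with $b=0$ and once with $b=B$, to obtain $\beta_{B,b}(x,v)$ for every $x,v\in V$ and $b\in\{0,B\}$. This accounts for the first four subproblem calls in the statement and produces all the data needed to fill in the edge weights $w(u^a,x)=-r(x)\alpha_{B,a}(u,x)$ and $w(x,v^b)=r(x)\beta_{B,b}(x,v)$ of the three-layer graph of Figure~\ref{F-fig2}.

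Next, I would collect these weights into a $2n\times p$ matrix (rows indexed by $V^{0,B}=V^0\cup V^B$, columns by the set $U$ of $p$ charging stations) and a $p\times 2n$ matrix, and compute their min-plus product. By the displayed formula for $\ell'(u^a,v^b)$, this single $\MPP(2n,p)$ instance yields $\ell'(u^a,v^b)$ for all $u,v\in V$ and $a,b\in\{0,B\}$ simultaneously; taking the nonnegative part $\ell(u^a,v^b)=(\ell'(u^a,v^b))^+$ is a trivial $O(n^2)$ postprocessing step and does not require an additional subproblem call.

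Finally, I would build the complete auxiliary graph $G^{0,B}$ on the $2n$ vertices $V^{0,B}$, equipped with the cost function $\ell_B=\ell$ just computed; this graph has at most $(2n)^2=4n^2$ arcs. A single call to $\APSP(4n^2,2n)$ produces $\delta_{\ell_B}(s^0,t^0)$ for every $s,t\in V$, which by the preceding lemma equals $\rho_B(s,t)$. Tallying the calls — two $\MFC$, two $\MIC$, one $\MPP(2n,p)$, and one $\APSP(4n^2,2n)$ — gives the statement. I do not foresee a real obstacle; the only point requiring any care is making sure that the min-plus product is phrased with the middle dimension equal to the number of \emph{charging} vertices (so that the $p$ in $\MPP(2n,p)$ is the number of finite-$r$ vertices rather than $n$), and that the APSP instance is measured on the complete graph $G^{0,B}$ rather than on $G$.
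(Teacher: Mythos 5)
Your proposal is correct and follows exactly the paper's own route: two $\MFC$ calls ($a=0,B$) and two $\MIC$ calls ($b=0,B$) to populate the weights of the layered graph, one $\MPP(2n,p)$ product over the $p$ charging stations to obtain $\ell'$ (with the $(\cdot)^+$ truncation as cheap postprocessing), and one $\APSP$ call on the complete $2n$-vertex graph $G^{0,B}$, with correctness supplied by the lemma $\rho_B(s,t)=\delta_{\ell_B}(s^0,t^0)$. No gaps.
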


We need two instances of $\MFC(m,n)$, for $a=0,B$, and two instances of $\MIC(m,n)$, for $b=0,B$. As noted, $\MIC(m,n)$ can be reduced to $\MFC(m,n)$ by simply reversing the graph. It is also interesting to note that $APSP(n^2,n)$ can be reduced to a collection of $\MPP(n,n)$ problems that can be solved in the time needed for a single $\MPP(n,n)$ instance (see Aho et al.~\cite{AHU74}). Also, $\MPP(n,p)$ can easily be reduced to $APSP(2np,2n+p)$.

Similarly, if we let $\MCP_\Delta(m,n)$ be the problem of computing $\rho_{B,\Delta}(s,t)$ for every $s,t\in V$, i.e., computing minimum-cost plans between every pair of vertices when at most~$\Delta$ rechargings are allowed, and $\APSP_\Delta(u,v)$ be the problem of computing all-pairs shortest paths that are allowed to contain at most~$\Delta$ arcs, then we have 

\begin{theorem}
    $\MCP_\Delta(m,n)$ can be reduced to two instances of $\MFC(m,n)$ and $\MIC(m,n)$ each, an instance of $\MPP(2n,p)$, where $p\le n$ is the number of charging stations, and an instance of $\APSP_\Delta(4n^2,2n)$.
\end{theorem}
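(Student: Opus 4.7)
The plan is to mimic the preceding reduction almost verbatim and to observe that the auxiliary graph $G^{0,B}$ together with the cost function $\ell_B$ already encodes a per-arc budget of one recharging, which meshes perfectly with $\APSP_\Delta$. By construction, each arc $(u^a,v^b)$ of $G^{0,B}$ carries the cost of a sub-plan from $u^a$ to $v^b$ that uses at most one recharging. Hence a path in $G^{0,B}$ of at most $\Delta$ arcs unfolds, by concatenation of the corresponding sub-plans, into a plan in $G$ that uses at most $\Delta$ rechargings, yielding the inequality $\rho_{B,\Delta}(s,t) \le \delta_{\ell_B,\Delta}(s^0,t^0)$, where $\delta_{\ell_B,\Delta}$ denotes the shortest-path distance restricted to paths of at most $\Delta$ arcs.

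For the matching lower bound I would re-apply Lemma~\ref{L-y}: given a minimum-cost plan $P'$ from $s$ to $t$ that uses $k\le\Delta$ rechargings, the lemma produces a sequence of states $y_0^{a_0},y_1^{a_1},\ldots,y_k^{a_k}$ with $y_0=s$, $y_k=t$, $a_0=0$, $a_i\in\{0,B\}$ for $1\le i<k$, and at most one recharging between successive states. Replacing the possibly non-zero terminal charge $a_k$ by $0$, exactly as in the unbounded reduction, gives a path of at most $k\le\Delta$ arcs from $s^0$ to $t^0$ in $G^{0,B}$ whose $\ell_B$-length does not exceed the cost of $P'$. Combined with the previous paragraph, this proves $\rho_{B,\Delta}(s,t)=\delta_{\ell_B,\Delta}(s^0,t^0)$ for every $s,t\in V$.

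The algorithmic pipeline then assembles exactly as in the unbounded theorem. First I would compute $\alpha_{B,a}(\cdot,\cdot)$ and $\beta_{B,b}(\cdot,\cdot)$ for $a,b\in\{0,B\}$, using two instances each of $\MFC(m,n)$ and $\MIC(m,n)$. Second I would obtain all entries $\ell_B(u^a,v^b)$ through a single $\MPP(2n,p)$ with $p\le n$ the number of charging stations, via the identity $\ell_B=(\ell'_B)^+$ where $\ell'_B$ is the min-plus product of the matrices with entries $w(u^a,x)=-r(x)\alpha_{B,a}(u,x)$ and $w(x,v^b)=r(x)\beta_{B,b}(x,v)$. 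Finally, on the complete $(2n)$-vertex graph $G^{0,B}$ with these $\ell_B$-costs, I would invoke $\APSP_\Delta(4n^2,2n)$ in place of ordinary $\APSP$, and read off $\delta_{\ell_B,\Delta}(s^0,t^0)=\rho_{B,\Delta}(s,t)$ for every $s,t\in V$.

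The only subtlety worth double-checking is the bookkeeping of rechargings at segment boundaries and the degenerate case $k=0$ of Lemma~\ref{L-y}, where the single arc $(s^0,t^0)$ already accommodates a no-recharging plan. In every case the number of rechargings in the reconstructed plan is bounded by the number of arcs of the auxiliary path, so the budget $\Delta$ is preserved exactly. Since this is the sole structural change from the unbounded-recharging reduction, I do not anticipate any further obstacle beyond this accounting check.
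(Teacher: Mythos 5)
Your proposal is correct and takes essentially the same route as the paper, which proves only the unbounded version explicitly and dismisses this one with ``similarly'': the key observation you spell out --- that each arc of $G^{0,B}$ accounts for at most one recharging, so bounding the number of arcs via $\APSP_\Delta$ bounds the number of rechargings, and Lemma~\ref{L-y} produces a $k$-arc witness path from a plan with $k$ rechargings --- is exactly what the paper relies on. The one point worth making explicit is that Lemma~\ref{L-y} must here be applied to a plan that is optimal \emph{among plans with at most $\Delta$ rechargings}, which is legitimate because the exchange argument in Lemma~\ref{L-x} never increases the number of recharging stops.
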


\section{Algorithms for minimum-cost plans}\label{S-algs}

Using the reduction of the previous section and the best available algorithms for the Maximum Final Charge (\MFC\,), Minimum Initial Charge (\MIC\,), Min-plus product (\MPP\,) and All-Pairs Shortest Paths (\APSP\,) we obtain the following results:

\begin{theorem}\label{T-no}
    The all-pairs version of the Minimum-Cost Plans (\MCP) problem in a graph with no negative cycles can be solved in $O(\frac{n^3}{2^{c\sqrt{\log n}}}+mn)=O(n^3)$ time, for some $c>0$. The same time bound, with a different constant~$c$, applies when at most $\Delta\le n$ rechargings can be used on each path.
\end{theorem}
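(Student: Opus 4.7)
The plan is to apply the reduction theorem just stated and to bound the cost of each of its four ingredients using the bounds cited earlier in the paper together with the best known truly subcubic algorithm for min-plus products.

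First I would invoke the reduction: computing $\rho_B(s,t)$ for all $s,t\in V$ breaks into (a) two instances of $\MFC(m,n)$ and two of $\MIC(m,n)$ on the original graph, (b) a single min-plus product of dimensions $2n\times p$ by $p\times 2n$ with $p\le n$, and (c) a standard $\APSP(4n^2,2n)$ on the auxiliary complete graph $G^{0,B}$ with cost function $\ell_B$. For part (a), the paper quotes the $O(mn+n^2\log n)$ algorithm of Dorfman et al.\ for the no-negative-cycles case; since $\MIC$ reduces to $\MFC$ by reversing arcs, the total cost of part (a) is $O(mn+n^2\log n)$. For part (b), I would use Williams' min-plus product algorithm, which solves $\MPP(2n,p)$ with $p\le n$ in $O(n^3/2^{c_1\sqrt{\log n}})$ time. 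For part (c), since APSP reduces to a constant number of min-plus matrix multiplications on matrices of the same order (Aho--Hopcroft--Ullman), the same $O(n^3/2^{c_1\sqrt{\log n}})$ bound applies (the auxiliary graph has $2n$ vertices, so its dimension is still $\Theta(n)$).

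Adding the three contributions yields the claimed $O(n^3/2^{c\sqrt{\log n}}+mn)$ bound, absorbing the $n^2\log n$ term and the constant factor blow-ups into the exponent by choosing a slightly smaller constant $c>0$. The bound collapses to $O(n^3)$ in the worst case because $m\le n^2$.

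For the bounded version with at most $\Delta\le n$ rechargings, the reduction replaces $\APSP$ by $\APSP_\Delta$ on the $2n$-vertex auxiliary graph. I would compute this by repeated squaring in the min-plus semiring: keeping running ``min'' of the first $\Delta$ powers of the weight matrix costs $O(\log\Delta)$ min-plus multiplications, each of size $\Theta(n)\times\Theta(n)$, hence $O(n^3\log\Delta/2^{c_1\sqrt{\log n}})$ time. Since $\log\Delta\le\log n = 2^{\log\log n}$, the extra $\log\Delta$ factor is swallowed by weakening the exponent from $c_1\sqrt{\log n}$ to $c_2\sqrt{\log n}$ for any $c_2<c_1$ and all sufficiently large $n$, giving the same $O(n^3/2^{c\sqrt{\log n}}+mn)$ bound with a new constant. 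The main (mild) obstacle is precisely this bookkeeping: checking that every subcomputation actually fits into the $n^3/2^{c\sqrt{\log n}}$ budget and that the $\log n$ and $\log\Delta$ factors can be absorbed into the exponent; everything else is a direct plug-in of existing bounds into the reduction.
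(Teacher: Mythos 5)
Your proposal is correct and follows essentially the same route as the paper: plug the reduction into the $O(mn+n^2\log n)$ bounds of Dorfman et al.\ for \MFC/\MIC, use Williams' subcubic algorithm for the min-plus product and the APSP on the auxiliary $2n$-vertex graph, and handle the $\Delta$-bounded case by $O(\log\Delta)$ repeated min-plus squarings whose $\log\Delta\le\log n$ overhead is absorbed by weakening the constant in the exponent. Nothing further is needed.
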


\begin{proof}
    Dorfman et al.~\cite{DKTZ23} gave algorithms for solving the \MFC\ and the \MIC\ problems in $O(mn+n^2\log n)$ time, on graphs with no negative cycles. The \MPP\ and \APSP\ problems can be solved in $O(\frac{n^3}{2^{c\sqrt{\log n}}})$ time using a randomized algorithm of Williams \cite{Williams18,Williams21}, or a deterministic algorithm of Chan and Williams \cite{ChWi21}, for some $c>0$.

    When a bound $\Delta$ is placed on the number of chargings, we need to solve an $\APSP_\Delta(n^2,n)$ problem, rather than an $\APSP(n^2,n)$ problem. It is well-known that such a problem can be solved by computing $O(\log\Delta)$ \MPP\ problems, \footnote{We simply need to raise the weighted adjacency matrix of the graph, with zeros on the diagonal, to the $\Delta$-th power with respect to min-plus products. This can be easily done using $O(\log\Delta)$ min-plus products.} and hence can also be solved in $O(\frac{n^3}{2^{c\sqrt{\log n}}})$ time, for some $c>0$. (Note that $\frac{n^3\log n}{2^{c\sqrt{\log n}}}=O(\frac{n^3}{2^{c'\sqrt{\log n}}})$ for any $c'<c$.)
\end{proof}

The bound in Theorem~\ref{T-no} also applies when there may be negative cycles, if the capacity~$B$ of the battery is sufficiently large, i.e., $B\ge nM$, where $M=\max_{(u,v)\in A}|c(u,v)|$. (For simplicity, we did not include this in the statement of the theorem.)

We note that the instance of the \APSP\/ problem that we need to solve is dense, i.e., $m=n^2$. Thus, efficient algorithms for sparse instances of the \APSP\ problem, such as the $O(mn+n^2\log n)$-time algorithm obtained by running Dikstra's \cite{Dijkstra59} algorithm from every vertex, implemented using Fibonacci heaps \cite{FrTa87} or Hollow heaps \cite{HKTZ17}, or the slightly faster $O(mn+n^2\log\log n)$-time algorithm of Pettie~\cite{Pettie04}, would both require $O(n^3)$ time. We also note that a slightly faster algorithm for solving the $\APSP_\Delta(n^2,n)$ problem can be obtained using the sampling technique of Zwick \cite{Zwick02}, showing that when $\Delta>\log n$, the $O(\log \Delta)$ min-plus products can essentially be replaced by only $O(\log\log n)$ such products. This does not change the value of the constant~$c$. 

\begin{theorem}\label{T-yes}
    The all-pairs version of the Minimum-Cost Plan (\MCP) problem in a graph that may contain negative cycles can be solved in $O(mn^2+n^3\log n)$ time. The same time bound applies when at most $\Delta$ rechargings can be used on each path.
\end{theorem}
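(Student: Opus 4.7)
The plan is to instantiate the reduction of Section~\ref{S-reduction} with the algorithms that are robust to negative cycles, and argue that each individual subproblem fits inside the target budget of $O(mn^2+n^3\log n)$.

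First, invoke the reduction theorem: all-pairs \MCP\ reduces to two instances of \MFC$(m,n)$ and two instances of \MIC$(m,n)$, one instance of \MPP$(2n,p)$ with $p\le n$, and one instance of \APSP$(4n^2,2n)$ on the auxiliary graph $G^{0,B}$ with cost function $\ell_B$. For \MFC\ and \MIC\ on graphs that may contain negative cycles, Dorfman et al.~\cite{DKTZ23} provide algorithms running in $O(mn^2+n^3\log n)$ time, which already matches the theorem's bound; this is the dominant cost.

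Next I would bound the remaining subproblems. The min-plus product \MPP$(2n,p)$ with $p\le n$ can be carried out in $O(n^3)$ time by the na\"\i ve algorithm, well within the budget. For the auxiliary \APSP\ call, the key observation is that the new costs $\ell_B(u^a,v^b)=\ell'(u^a,v^b)^+$ are by construction non-negative, so $G^{0,B}$ has non-negative edge weights even when $G$ itself contains negative cycles. Hence Dijkstra from each of the $2n$ sources, implemented with Fibonacci heaps \cite{FrTa87}, solves \APSP$(4n^2,2n)$ in $O((2n)(4n^2)+(2n)^2\log(2n))=O(n^3)$ time, which is again absorbed by $O(mn^2+n^3\log n)$.

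For the bounded variant $\MCP_\Delta$, the only change in the reduction is that \APSP\ is replaced by $\APSP_\Delta(4n^2,2n)$. As noted in the proof of Theorem~\ref{T-no}, this can be solved by repeated squaring in the min-plus semiring using $O(\log\Delta)=O(\log n)$ \MPP\ calls, each of cost $O(n^3)$; this contributes $O(n^3\log n)$, again inside the stated bound. The potential worry here is that the auxiliary graph could a priori have negative edge weights under this variant, which would block Dijkstra; but since $\ell_B\ge 0$ holds for exactly the same reason as above, the min-plus powering approach is still valid and no sign issues arise. The main obstacle is therefore not any new idea but simply tracking that every piece---the \MFC/\MIC\ calls, the single \MPP, and the auxiliary shortest-paths computation (either Dijkstra or min-plus powering for the $\Delta$-bounded case)---fits under $O(mn^2+n^3\log n)$, with the bottleneck being the \MFC\ and \MIC\ algorithms of \cite{DKTZ23}.
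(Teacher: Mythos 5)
Your proposal is correct and follows essentially the same route as the paper: the negative-cycle-tolerant $O(mn^2+n^3\log n)$ algorithms of Dorfman et al.\ for \MFC/\MIC\ dominate, the \MPP\ is done na\"{\i}vely in $O(n^3)$, and the $\Delta$-bounded case uses $O(\log\Delta)$ min-plus products. The only (immaterial) difference is that you solve the auxiliary \APSP\ by Dijkstra after observing $\ell_B\ge 0$, whereas the paper simply uses Floyd--Warshall; both cost $O(n^3)$.
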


\begin{proof}
    Dorfman et al.~\cite{DKTZ23} gave algorithms for solving the all-pairs \MFC\ and \MIC\ problems in $O(mn^2+n^3\log n)$ time on graphs that may contain negative cycles. The \MPP\ problem can be solved in $O(n^3)$ time using the na\"{i}ve algorithm and the $\APSP(n^2,n)$ problem can be solved in $O(n^3)$ time using the Floyd-Warshall algorithm \cite{Floyd62,Warshall62}. The $\APSP_\Delta(n^2,n)$ can be solved in $O(n^3\log\Delta)=O(n^3\log n)$ time by solving $O(\log \Delta)$ \MPP\ problems.
\end{proof}

All the time bounds above are in a model similar to the addition-comparison model used for standard shortest paths algorithms. (See, e.g., Zwick \cite{Zwick01}.) Arc and recharge costs can be arbitrary real numbers but it is assumed that `reasonable' operations on them can be performed in constant time. Faster algorithms can be obtained if the battery capacity~$B$ is a relatively small integer, all arc costs $c(a)$ are integers, and all charging costs $r(v)$ are fairly small integers. In particular, if $R=\max\{r(v)\mid v\in V \,,\,r(v)<\infty\}$, then all arc lengths in the $\APSP(n^2,n)$ instance that needs to be solved are in $\{0,1,\ldots,BR\}$. Zwick \cite{Zwick02} obtained an $O(M^{1/(4-\omega)}n^{2+1/(4-\omega)})$-time algorithm for the APSP problem when arc lengths are integers of absolute value at most~$M$, where $\omega$ is the exponent of matrix multiplication. Currently $\omega<2.3719$ due to a recent result of Duan, Wu and Zhou \cite{DuWuZh22}. Note that the running time of this algorithm is subcubic when $M\ll n^{3-\omega}$, or in our case when $BR\ll n^{3-\omega}$. If $\omega>2$, then an improved running time can be obtained using rectangular matrix multiplication. See Zwick \cite{Zwick02} for the exact details.

Faster algorithms can also be obtained for the \MFC\ and \MIC\ problems in the word-RAM model, in which arc costs are assumed to be integers that fit into single machine words. This is done by replacing the standard comparison-based priority queues used by these algorithms by word-RAM priority queues. (See, e.g., Thorup~\cite{Thorup04}.) The running times of the improved algorithms, when there are no negative cycles, are still $\Omega(mn)$. Thus, no improvement is obtained over the bound given in Theorem~\ref{T-no}.

\section{Reducing shortest paths to minimum-cost plans}\label{S:inv-reduction}

In this short section we describe a trivial reduction from $\APSP(m,n)$ to $\MCP(m+n,2n)$, showing that the Minimum-Cost Plans (\MCP\,) problem is at least as hard as the classical All-Pairs Shortest Paths (\APSP\,) problem.

\begin{theorem}
    The $\APSP(m,n)$ problem in a graph with nonnegative arc costs can be reduced in linear time to an $\MCP(m+n,2n)$ problem.
\end{theorem}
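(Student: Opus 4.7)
The plan is to construct, in linear time, an auxiliary MCP graph $G'$ with $2n$ vertices and $m+n$ arcs so that for every $s,t\in V$ the minimum-cost travel plan between the corresponding original-vertex pair equals the shortest-path distance $\delta_G(s,t)$. The intuition is to split each vertex $v$ into a ``chargeable'' copy $v$ (with $r'(v)=1$) and a ``pass-through'' copy $v^*$ (with $r'(v^*)=\infty$), and then route every original arc through the split. Uniform unit charging cost at the chargeable copies will force the plan's cost to equal the total energy the car purchases, and nonnegativity of the arc costs will force that total to equal the length of a traversed walk in $G$.

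Concretely, given $G=(V,A,c)$ with $c:A\to\RR^+$, I would define $G'=(V',A',c',r')$ by setting $V'=V\cup V^*$ with $V^*=\{v^*:v\in V\}$, and $A'=\{(v,v^*):v\in V\}\cup\{(u^*,v):(u,v)\in A\}$, with costs $c'(v,v^*)=0$ and $c'(u^*,v)=c(u,v)$. Then $|V'|=2n$ and $|A'|=m+n$. Set $r'(v)=1$ for $v\in V$ and $r'(v^*)=\infty$ for $v^*\in V^*$, and pick the battery capacity to be any $B\ge\sum_{a\in A}c(a)$ (for instance $B=nM+1$), which is computable in linear time. The reduction extracts $\delta_G(s,t)$ from $\rho_B^{G'}(s,t)$ for every pair $s,t\in V$.

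The correctness claim to prove is $\rho_B^{G'}(s,t)=\delta_G(s,t)$. For the upper bound, take a shortest $s$-to-$t$ path $s=v_0,v_1,\ldots,v_k=t$ of length $L=\delta_G(s,t)$ in $G$; in $G'$, charge $L$ units at $s$ (legal since $r'(s)=1<\infty$ and $L\le B$) and traverse $s\to s^*\to v_1\to v_1^*\to\cdots\to v_{k-1}^*\to v_k=t$ without further recharging. Nonnegativity of the arc costs guarantees that the partial sums $\sum_{j\le i}c(v_{j-1},v_j)$ stay in $[0,L]\subseteq[0,B]$, so the plan is valid and its cost is $1\cdot L=\delta_G(s,t)$. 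For the lower bound, consider any travel plan $P$ from $s$ to $t$ in $G'$: since the primed copies have $r'=\infty$, every recharging happens at an original vertex where $r'=1$, hence $\mathrm{cost}(P)$ equals the total amount of energy purchased along $P$. The walk of $P$ in $G'$ projects (by identifying $v^*$ with $v$) to a walk from $s$ to $t$ in $G$ of the same total arc cost $E$, so energy purchased equals $E+F$ where $F\ge 0$ is the residual charge at $t$; thus $\mathrm{cost}(P)\ge E\ge\delta_G(s,t)$.

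The main obstacle is really just bookkeeping rather than a conceptual difficulty. One must verify that the construction is genuinely linear time (it is, since it adds $n$ new vertices and $n$ new arcs and copies the rest), that $B$ is large enough so the witnessing plan in the upper-bound direction never exceeds capacity (any $B\ge\sum_a c(a)$ works), that the unreachable case $\delta_G(s,t)=\infty$ transfers correctly (an $s$-to-$t$ walk in $G'$ forces an $s$-to-$t$ walk in $G$, so unreachability propagates), and that the paper's ``distinct charging costs'' simplification is harmless here since the quoted remark that ties can be broken easily still applies when all original-vertex charging costs equal $1$.
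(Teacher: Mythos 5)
Your reduction is correct, and it is the same basic idea as the paper's: double the vertices, give one copy of each vertex a unit charging price and the other an infinite one, pick $B$ large enough that one battery-load covers any shortest path, and read off $\delta_G(s,t)$ as a minimum plan cost. The one genuine difference is the wiring of the gadget. The paper adds an \emph{in-copy} $v'$ with a single zero-cost arc $(v',v)$ and makes only the primed copies chargeable; since no arc enters a primed vertex, the car can charge only at the initial vertex $s'$, and the lower bound $\rho_B^{G'}(s',t)\ge\delta_G(s,t)$ is then immediate (all energy is bought up front at unit price, and at least $\delta_G(s,t)$ energy is needed). Your gadget instead routes every original arc as $(u^*,v)$, so the car passes through a unit-price charging station at every intermediate vertex; you therefore need the extra energy-accounting step (uniform price $\Rightarrow$ cost equals total energy purchased $\Rightarrow$ cost $\ge$ total arc cost of the projected walk $\ge\delta_G(s,t)$), which you carry out correctly using nonnegativity to rule out capacity truncation. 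Both constructions have $2n$ vertices and $m+n$ arcs; the paper's makes the lower bound a one-liner, while yours is a touch more robust in that it does not depend on which endpoint copy one queries. One tiny slip: $B=nM+1$ is not in general an instance of $B\ge\sum_{a\in A}c(a)$ as your parenthetical suggests (the sum can be as large as $mM$), but it is independently a valid choice since $\delta_G(s,t)\le (n-1)M$ for every reachable pair, so nothing breaks.
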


\begin{proof}
    Let $G=(V,A,c)$, where $c:A\to\RR^+$ be the input to the \APSP\ problem. Construct a graph $G'=(V',A',c',r')$ where $V'=\{v',v\mid v\in V\}$ and $A'=\{(v',v)\mid v\in V\}\cup A$. Let $c'(v',v)=0$, for every $v\in V$, and $c'(a)=c(a)$, if $a\in A$. Also, let $r(v')=1$ and $r(v)=\infty$, for every $v\in V$. Let $B=n\max_{a\in A}c(a)$. We have $\delta^G(s,t)=\rho_B^{G'}(s',t)$, for every $s,t\in V$, where $\delta^G(s,t)$ is the distance from~$s$ to~$t$ in~$G$ and $\rho_B^{G'}(s',t)$ is the minimum cost of a plan from~$s'$ to~$t$ in~$G'$. This follows since in~$G'$ it is only possible to charge the battery at the initial vertex and all arc costs are non-negative. Thus, a  plan corresponds to a standard shortest path.
\end{proof}

\section{Non-zero initial charges}\label{S-initial-charges}

Up to this point, we considered optimal travel plans from~$s$ to~$t$ in which the car starts from~$s$ with an empty battery. (In particular, if no recharging is possible at~$s$ and all outgoing arcs of~$s$ have positive costs, the car cannot move out of~$s$.)  It is natural to also consider the case in which the car starts at~$s$ with some fixed initial charge~$a$. (For example, $a=B$ corresponds to the case in which the car starts from~$s$ with a full battery.) 

\subsection{Fixed initial charges}

A simple reduction allows us to solve, within the same asymptotic time bounds, the all-pairs version of the \MCP\ problem in which each vertex $v\in V$ has a fixed initial charge $0\le a(v)\le B$ associated with it. The initial charges at different vertices are not necessarily the same.

For each original vertex $v\in V$ create a new vertex $v_0$ and add an arc $(v_0,v)$ whose cost is $-a(v)$. Let $r(v_0)=\infty$, i.e., no recharging is possible at~$v_0$. (Alternatively, we could let $r(v_0)=r(v)$.) Starting at~$v_0$ with an empty battery corresponds to starting at~$v$ with a charge of~$a(v)$. The new graph has $2n$ vertices and $m+n$ vertices. Thus, the asymptotic running times of the algorithms given in Section~\ref{S-algs} are unaffected.


\subsection{Adding an initial charge or a source vertex}

A further natural generalization is to allow an arbitrary initial charge~$a$, not known in advance, at a specified source vertex~$s$. We show that if there are no negative cycles in the graph, then after the all-pairs version of the \MCP\ problem with zero initial charges is solved, we can solve the single-source version of the \MCP\ problem from a given source vertex~$s$ and an initial charge~$a$ in~$O(n^2)$ time.

The problem above corresponds to adding a new vertex~$s_0$ with $r(s_0)=\infty$, i.e., no charging at~$s_0$, and an arc $(s_0,s)$ of cost $-a$ and solving the single-source version of the \MCP\ problem from~$s_0$ with an initial charge of~$0$. We actually solve the more general problem in which we add new vertex~$s_0$ and a collection of arcs $(s_0,v_1),(s_0,v_2),\ldots,(s_0,v_k)$ to an arbitrary number of original vertices of the graph, each with possibly different cost $c(s_0,v_i)=-a_i\le 0$. This corresponds to the option of using one of several available electric cars, where the $i$-th available car is at vertex~$v_i$ and has an initial charge of~$a_i$.

We present a reduction from the problem of adding a new source vertex to a graph on which the all-pair \MCP\ problem was already solved, to the solution of two single-source \MFC\ problems and two vector-matrix min-plus products. When there are no negative cycles in the graph, the two single-source \MFC\ problems can be solved in $O(m+n\log n)$ time, using a feasible potential function computed while solving the all-pairs \MFC\ and \MIC\ problems on the original graph. (See \cite{DKTZ23} for details.) The two $1\times 2n$ by $2n\times 2n$ min-plus products can be computed na\"{i}vely in $O(n^2)$ time.


We start by computing $\alpha_{B,0}(s_0,v), \alpha_{B,B}(s_0,v)$ for every $v\in V$. These are the two instances of the single-source \MFC\ problem.

Next, as in Section~\ref{S-reduction}, we compute $\ell_B(s_0^0,v^b)$, for every $v\in V$ and $b\in\{0,B\}$. Recall that $\ell_B(s_0^0,v^b)$ is the minimum cost of getting from~$s_0$ to $v$, starting with an empty battery and arriving with either an empty or full battery and using at most one recharging. This can be done in $O(n^2)$ time by na\"{i}vely computing a vector-matrix min-plus product.

Now, since we already know $\delta_{\ell_B}(u^a,v^0)$, for every $u,v\in V$ and $a\in\{0,B\}$, we can compute $\rho_B(s,v)$, for every $v\in V$ using an additional vector-matrix min-plus product, again in~$O(n^2)$ time. More concretely, for every $v\in V$, we have $\rho_B(s_0,v)=\min_{u\in V,a\in\{0,B\}} \ell_B(s_0^0,u^a)+\delta_{\ell_B}(u^a,v^0)$. Correctness follows again by Lemma~\ref{L-y}.

\section{Concluding remarks}\label{S-concl}

We have presented a simple reduction from the problem of computing Minimum-Cost Plans (\MCP\,) between all pairs of vertices in a graph to two simpler problems: The first is the computation of optimal energetic paths when no recharging of the battery is allowed. This problem comes in two equivalent variants: Maximum Final Charge (\MFC\,) and Minimum Initial Charge (\MIC\,). The second problem is the standard All-Pairs Shortest Paths problem.

Using this simple reduction we have obtained an $O(\frac{n^3}{2^{c\sqrt{\log n}}}+mn)$-time algorithm for solving the \MCP\ in graphs with no negative cycles. 
This matches the running time of the fastest APSP algorithm of Williams \cite{Williams18,Williams21}, unless~$m$ is extremely close to~$n^2$.

An interesting open problem is whether there is an $\tilde{O}(mn)$-time algorithm for the all-pairs MCP problem in $n$-vertex, $m$-arc graphs, essentially matching the complexity of the standard APSP problem  for sparse graphs.

Another interesting open problem is whether there is an $\tilde{O}(n^3)$-time algorithm for the all-pairs MCP problem when the input graph may contain negative cycles. (An interesting feature of the MCP problem is that minimum-cost plans are well-defined, and are of finite length, even in this case. They may not be simple, however.)

All algorithms presented assume that $B$, the maximum capacity of the battery, and $\Delta$, the maximum number of rechargings allowed on each path, are fixed and known in advance. Is it possible to obtain an efficient algorithm that preprocesses an input graph and is then able to quickly answer queries of the form $\rho_{B,\Delta}(s,t)$, i.e., what is the minimum cost of a plan from~$s$ to~$t$ when the capacity of the battery is~$B$ and at most~$\Delta$ rechargings are allowed on the way from~$s$ to~$t$?

Finally, we remark that we considered the natural problem of finding minimum-cost plans, ignoring the time it takes to traverse arcs or to charge the battery. Finding a  plan that can be implemented within a given time limit is easily seen to be a NP-hard problem by a reduction from $0$-$1$ knapsack. Similarly, the problem becomes NP-hard if each arc $(u,v)$ has a monetary cost $r(u,v)$, i.e., a \emph{toll} that should be paid to traverse the arc, in addition to its energetic cost~$c(u,v)$, and the goal is to minimize the cost of travelling from~$s$ to~$t$. It may be possible to obtain interesting approximation algorithms, but this is beyond the scope of the current paper.

\section*{Acknowledgement}
We would like to thank two anonymous SOSA reviewers for valuable comments that helped improve the paper.

\bibliographystyle{plain}
\bibliography{main}

\end{document}